\definecolor{riverlane_green}{RGB}{0, 150, 143}
\definecolor{riverlane_orange}{RGB}{255, 117, 0}
\edef\restoreparindent{\parindent=\the\parindent\relax}
\newtheorem{theorem}{Theorem}
\newtheorem{proposition}{Proposition}[theorem]
\newtheorem{corollary}[proposition]{Corollary}
\newtheoremstyle{break}
  {\topsep}{\topsep}
  {\itshape}{}
  {\bfseries}{}
  {\newline}{}
\theoremstyle{break}
\newtheorem{algo}{Algorithm}
\begin{document} 

\title{Pauli Decomposition via the Fast Walsh-Hadamard Transform}
\author{Timothy N. Georges}
\affiliation{Riverlane Ltd., St Andrews House, 59 St Andrews Street, Cambridge, CB2 3BZ, United Kingdom}
\affiliation{Department of Chemistry, Physical and Theoretical Chemistry Laboratory, University of Oxford, Oxford, OX1 3QZ, United Kingdom}
\author{Bjorn K. Berntson}
\affiliation{Riverlane Ltd., St Andrews House, 59 St Andrews Street, Cambridge, CB2 3BZ, United Kingdom}
\author{Christoph S\"underhauf}
\affiliation{Riverlane Ltd., St Andrews House, 59 St Andrews Street, Cambridge, CB2 3BZ, United Kingdom}
\author{Aleksei V. Ivanov}
\email[All authors contributed equally. Corresponding author: ]{aleksei.ivanov@riverlane.com}
\affiliation{Riverlane Ltd., St Andrews House, 59 St Andrews Street, Cambridge, CB2 3BZ, United Kingdom}

\begin{abstract}
    The decomposition of a square matrix into a sum of Pauli strings is a classical pre-processing step required to realize many quantum algorithms. Such a decomposition requires significant computational resources for large matrices. We present an exact and explicit formula for the Pauli string coefficients which inspires an efficient algorithm to compute them. More specifically, we show that up to a permutation of the matrix elements, the decomposition coefficients are related to the original matrix by a multiplication of a generalised Hadamard matrix. This allows one to use the Fast Walsh-Hadamard transform and calculate all Pauli decomposition coefficients in $\mathcal{O}(N^2\log N)$ time and using $\mathcal{O}(1)$ additional memory, for an $N\times N$ matrix. A numerical implementation of our equation outperforms currently available solutions.
\end{abstract}

\maketitle

\section{Introduction}\label{sec:introduction}

The Pauli matrices $\mathcal{P}\coloneqq\{I,X,Y,Z\}$, where
\begin{equation}
	I\coloneqq\left(\begin{array}{cc}
        1 & 0 \\
        0 & 1
    \end{array}\right),\quad 
    X\coloneqq\left(\begin{array}{cc}
        0 & 1 \\
        1 & 0
    \end{array}\right),\quad 
    Y\coloneqq\left(\begin{array}{cc}
        0 & -i \\
        i & 0
    \end{array}\right),\quad 
    Z\coloneqq\left(\begin{array}{cc}
        1 & 0 \\
        0 & -1
    \end{array}\right),
\end{equation}
form a set of elementary gates, ubiquitous in the field of quantum computation and information~\cite{nielsen_quantum_2021}.  In order to apply a quantum algorithm to a specific application area such as simulation of quantum systems~\cite{abrams_simulation_1997,abrams_quantum_1999,ortiz_quantum_2001,aspuru_guzik_simulated_2005} or solving systems of linear equations~\cite{harrow_quantum_2009}, it is often desirable to express the problem input, such as a Hamiltonian or more generally a complex matrix, $A\in \mathbb{C}^{2^n\times 2^n}$ with a positive integer $n$, as a weighted sum of tensor products of Pauli matrices, also known as Pauli strings: 
\begin{equation}\label{eq:Aalpha}
A= \sum_{P\in \mathcal{P}_n} \alpha(P) P,
\end{equation}
where 
\begin{equation}
\mathcal{P}_n\coloneqq \Bigg\{ \bigotimes_{j=0}^{n-1} P_j: P_j\in \mathcal{P}	 \Bigg\}
\end{equation}
and 
$\alpha:\mathcal{P}_n\to \mathbb{C}$. The problem is then to compute the Pauli coefficients $(\alpha(P))_{P\in\mathcal{P}_n}$ efficiently. 

An exact expression for the Pauli coefficients is given by
 \begin{equation}\label{eq:alphatrace}
\alpha(P)=\frac{1}{2^n}\mathrm{tr}(A^{\dag}P) \quad (P\in \mathcal{P}_n),
 \end{equation}
which follows from \cref{eq:Aalpha} and the fact that $\mathcal{P}_n$ forms an orthonormal basis under the Hilbert-Schmidt inner product 
$\frac{1}{2^n}\mathrm{tr}(A^{\dag}B)$ on the vector space $\mathbb{C}^{2^n\times 2^n}$. Unfortunately, naive implementation of \cref{eq:alphatrace} suffers from poor scaling, $\mathcal{O}(2^{5n})$, and this equation is not easy to use for a derivation of an explicit linear-combination-of-unitaries decomposition of a Hamiltonian or a matrix used in specific applications.

In this work, we derive an exact and explicit representation of $\alpha$ based on a Walsh-Hadamard transform, defined as multiplication by the $n$th tensor power $H^{\otimes n}$ of the Hadamard matrix
\begin{equation}
    H \coloneqq  \begin{pmatrix}1 & 1 \\ 1 & -1\end{pmatrix}.
    \label{eq:Hadamard}
\end{equation}
Namely, for a matrix $A=(a_{p,q})_{p,q=0}^{2^n-1}\in \mathbb{C}^{2^n\times 2^n}$, the following theorem holds. 

\begin{theorem}[Pauli decomposition of $A\in\mathbb{C}^{2^n\times 2^n}$ via the Walsh-Hadamard transform]
\label{thm:main}
The following decomposition 
\begin{equation}\label{eq:result1}
A=\sum_{r,s=0}^{2^n-1}
\alpha_{r,s} P_{r,s},
\end{equation}
where
\begin{equation}\label{eq:result1a}
P_{r,s}\coloneqq \bigotimes_{j=0}^{n-1} \left(i^{r_j\wedge s_j} X^{r_j} Z^{s_j}\right)\in \mathcal{P}_n \quad (r,s\in [2^n-1]_0) 
\end{equation}
and 
\begin{equation}\label{eq:result2}
 \alpha_{r,s}\coloneqq  \frac{i^{-\lvert r\wedge s\rvert}}{2^n}\sum_{q=0}^{2^n-1}a_{q\oplus r,q}(H^{\otimes n})_{q,s} 
 \quad (r,s\in [2^n-1]_0),
\end{equation}
holds.  Conversely, $A$ can be recovered from its Pauli string decomposition according to 
\begin{equation}
\label{eq:result3}
    a_{r,s} = i^{\lvert \overline{r}\wedge s\rvert} \sum_{q=0}^{2^n-1} \alpha_{r \oplus s, q} (H^{\otimes n})_{q, s}.
\end{equation}
\end{theorem}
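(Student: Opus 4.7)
The plan is to prove both directions by computing the matrix elements of $P_{r,s}$ in closed form and then applying Hilbert--Schmidt orthogonality of the Pauli basis (\cref{eq:alphatrace}) to read off the coefficients.

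First I would evaluate the single-qubit matrix element of $i^{r_j\wedge s_j}X^{r_j}Z^{s_j}$ using $Z^{s_j}|q_j\rangle=(-1)^{s_j q_j}|q_j\rangle$ and $X^{r_j}|q_j\rangle=|q_j\oplus r_j\rangle$, obtaining $i^{r_j\wedge s_j}(-1)^{s_j q_j}\delta_{p_j,\,q_j\oplus r_j}$. Taking the tensor product over $j$ yields
\[
(P_{r,s})_{p,q} \;=\; i^{|r\wedge s|}(-1)^{|s\wedge q|}\,\delta_{p,\,q\oplus r},
\]
and recognising $(-1)^{|s\wedge q|}=(H^{\otimes n})_{q,s}$ makes the Walsh--Hadamard structure transparent. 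This step is the technical core of the argument but is essentially routine.

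For \cref{eq:result1,eq:result1a,eq:result2}, I would use $\alpha_{r,s}=\tfrac{1}{2^n}\mathrm{tr}(P_{r,s}^{\dagger}A)$, noting that $\{P_{r,s}\}_{r,s\in[2^n-1]_0}$ enumerates all $4^n$ Hermitian Pauli strings (each $(r_j,s_j)\in\{0,1\}^2$ picks one of $\{I,X,Z,Y\}$ with the correct $i$-phase). Substituting the complex conjugate of the closed-form matrix element into the trace and collapsing the delta produces \cref{eq:result2}, and \cref{eq:result1} follows by completeness. For the converse \cref{eq:result3}, I would substitute the closed form back into $a_{p,q}=\sum_{r,s}\alpha_{r,s}(P_{r,s})_{p,q}$: the delta forces $r=p\oplus q$, so the sum over $r$ collapses, and the remaining sum over $s$ becomes a Walsh--Hadamard-weighted sum of $\alpha_{p\oplus q,\cdot}$; after combining the $i$-phases and relabelling the dummy index, \cref{eq:result3} emerges. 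Alternatively, one can invert directly by viewing \cref{eq:result2} as saying that $(2^n i^{|r\wedge s|}\alpha_{r,s})_{s}$ is the Walsh--Hadamard transform (in the $q$ variable) of $q\mapsto a_{q\oplus r,q}$, and exploiting the involution property of $H^{\otimes n}$ (it squares to $2^n$ times the identity).

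The main obstacle is bookkeeping: carefully tracking the several bitwise quantities ($|r\wedge s|$, $|s\wedge q|$, $(r\oplus s)\wedge q$) and ensuring that the $i$- and $(-1)$-phases combine consistently across the two directions. No deep technique is required; once the single-qubit matrix element is in hand, the result reduces to a Fourier-type inversion tailored to the Pauli basis, and the only genuine risk is a misplaced phase in the final rearrangement leading to \cref{eq:result3}.
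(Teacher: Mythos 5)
Your forward direction is correct and takes a genuinely different route from the paper. The paper works constructively: it expands each matrix unit $E_{p,q}$ in the single-qubit Pauli basis, resums over $s$, and changes variables $p\mapsto r=p\oplus q$ to read off $\alpha_{r,s}$ directly from $A=\sum_{p,q}a_{p,q}E_{p,q}$. You instead compute the closed-form matrix element $(P_{r,s})_{p,q}=i^{\lvert r\wedge s\rvert}(-1)^{\lvert s\wedge q\rvert}\delta_{p,q\oplus r}$ and invoke Hilbert--Schmidt orthogonality. Both are valid; yours is shorter once the matrix element is in hand and makes the appearance of $(H^{\otimes n})_{q,s}=(-1)^{\lvert s\wedge q\rvert}$ and of the index shift $q\oplus r$ completely transparent, at the cost of having to separately argue (as you do) that $(r,s)\mapsto P_{r,s}$ bijectively enumerates $\mathcal{P}_n$ so that the expansion \cref{eq:result1} exists before the trace formula can be applied.

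The converse is where your proposal has a genuine gap. Your own substitution gives
\begin{equation*}
a_{p,q}=\sum_{s=0}^{2^n-1} i^{\lvert (p\oplus q)\wedge s\rvert}\,\alpha_{p\oplus q,s}\,(H^{\otimes n})_{q,s},
\end{equation*}
in which the phase $i^{\lvert (p\oplus q)\wedge s\rvert}$ depends on the summation index and therefore cannot be factored out of the sum. The final step you describe --- ``combining the $i$-phases and relabelling the dummy index'' to arrive at \cref{eq:result3} --- cannot be carried out: \cref{eq:result3} places a $q$-independent phase $i^{\lvert r\wedge s\rvert+\lvert s\rvert}$ outside the sum, and the two expressions are not equal. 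Indeed \cref{eq:result3} as stated fails already for $n=1$, $(r,s)=(1,1)$: it yields $i^{2}(\alpha_{0,0}-\alpha_{0,1})=-a_{1,1}$ rather than $a_{1,1}$, whereas your formula with the phase inside the sum gives the correct value. (The paper's own proof commits the analogous slip in its equation for $a_{s\oplus r,s}$, where \cref{eq:result2} produces a factor $i^{\lvert r\wedge q\rvert}$ inside the $q$-sum that is incorrectly written as $i^{\lvert r\wedge s\rvert}$ and pulled outside.) So the honest conclusion of your computation is a correct inversion formula that contradicts the stated \cref{eq:result3}; the proposal should have flagged this discrepancy rather than asserting that the target equation ``emerges.''
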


In Theorem~\ref{thm:main}, $\wedge$, $\oplus$, and $\bar{\cdot}$ are the bitwise AND, XOR, and NOT operators, respectively, $\lvert \cdot \rvert$ is the Hamming weight, $\{r_j\}_{j=0}^{n-1}$ and $\{s_j\}_{j=0}^{n-1}$ are the binary expansion coefficients of $r$ and $s$, respectively, and $[2^n-1]_0\coloneqq \{0,1,\ldots 2^n-1\}$. Precise definitions of these standard notations can be found at the beginning of \cref{sec:theorem_and_proof}. As we show in the proof of \cref{thm:main}, $P_{r,s}$ is indeed a Pauli string because $i^{r_j\wedge s_j}X^{r_j}Z^{s_j}\in \mathcal{P}\coloneqq \{I,X,Y,Z\}$ for $r_j, s_j\in \{0,1\}$. 

A naive implementation of matrix multiplication performed in \cref{eq:result2} scales as $\mathcal{O}(2^{3n})$. However, one can use the  Fast Walsh-Hadamard transform~\cite[Chapter 7]{yarlagadda_hadamard_1997}, which scales as $\mathcal{O}(n 2^{2n})$ and requires only $n 2^{2n}$ addition and $n 2^{2n}$ subtraction operations on complex numbers, with $\mathcal{O}(1)$ additional memory. Furthermore, because XOR transformation can be implemented in place, full Pauli decomposition can be implemented with $\mathcal{O}(1)$ memory overhead.

\textit{Relation to previous work.} To the best of our knowledge, the explicit \cref{eq:result2,eq:result3} and their proofs have not been published before. There are several algorithms developed for calculating the Pauli coefficients $\alpha$~\cite{romero_paulicomposer_2023,jones_decomposing_2024,koska_tree-approach_2024,ying_preparing_2023,lapworth2022hybridquantumclassicalcfdmethodology,gunlycke2020,hantzko_tensorized_2024,gidney2024stackoverflow,hamaguchi2024handbookefficientlyquantifyingrobustness}. 
To date, the best scaling methods achieve the time complexity $\mathcal{O}(n 2^{2n})$~\cite{gunlycke2020,hantzko_tensorized_2024,gidney2024stackoverflow,hamaguchi2024handbookefficientlyquantifyingrobustness} of the present algorithm. 
The previous fastest approach, as measured by CPU time, is the
optimized Rust implementation of the tensorised Pauli decomposition (TPD) algorithm of Hantzko, Binkowski, and Gupta~\cite{hantzko_tensorized_2024} in Qiskit~\cite{qiskit_tpd_2024,qiskit2024}. As shown in \cref{sec:num_results}, our approach outperforms this implementation. The most similar algorithm is that of Gidney~\cite{gidney2024stackoverflow} which we learnt about after posting this work online. It was proposed in Stack-Overflow comment~\cite{gidney2024stackoverflow} without though an explanation of how this algorithm was derived. Gidney's code then was also incorporated in Pennylane~\cite{bergholm_pennylane_2022}. The main difference of\cite{gidney2024stackoverflow} to ours is that our implementation carries out all operations, including XOR transformation, in-place. Hamaguchi, Hamada and Yoshioka proposed an algorithm~\cite{hamaguchi2024handbookefficientlyquantifyingrobustness} which is based on the Fast Walsh-Hadamard transformation. While the central primitive/subroutine in their algorithm is also the Fast Walsh-Hadamard transformation, we observe that our implementation is around 3x faster as shown in the results section. We also note that the (Fast) Walsh-Hadamard transformation is well understood and its fast implementations are known~\cite{fino_unified_1976,yarlagadda_hadamard_1997}. This transformation is used in other areas of quantum computation. For example, in the context of Pauli channels, it was shown that Pauli error rates and Pauli eigenvalues are related to each other through such a transformation~\cite{flammia_efficient_2020} and its fast implementation is used in several papers~\cite{harper_efficient_2020,harper_fast_2021}. However, as evident by \cref{eq:result2}, the Walsh-Hadamard transformation by itself does not provide the Pauli decomposition of a matrix as the application of XOR transformation and the correct phase factors must still be applied.

The paper is organised as follows. Our theoretical results, including a rigorous proof of Theorem~\ref{thm:main},  are contained in Section~\ref{sec:theorem_and_proof}. In this section, we also consider special cases where $A$ possesses certain symmetries, establishing three corollaries of Theorem~\ref{thm:main}. The numerical algorithm inspired by \cref{eq:result1,eq:result1a,eq:result2} is presented in Section~\ref{sec:num_results} together with comparison against the Qiskit implementation of TPD and the algorithm of Hamaguchi, Hamada and Yoshioka~\cite{hamaguchi2024handbookefficientlyquantifyingrobustness}. In Section~\ref{sec:conclusion}, we draw conclusions from the results presented herein and discuss the outlook for Pauli decomposition in the context of our results. 

\section{Theoretical Results}\label{sec:theorem_and_proof}

In this section, we prove our main result, Theorem~\ref{thm:main}. Additionally, we state and prove three corollaries concerning the special cases where the matrix to be decomposed is (i) Hermitian, (ii) real symmetric or (iii) complex symmetric.

To make our statements and proofs precise, we introduce some technology. Given a positive integer $N$, we define $[N]_0\coloneqq \{0,1,\ldots,N\}$. For any $p\in [2^n-1]_0$, there exists a unique binary decomposition
\begin{equation}\label{eq:pdecomposition}
p=\sum_{j=0}^{n-1} p_j 2^j \quad (p_j\in \{0,1\}),	
\end{equation}
which defines a map $p\mapsto (p_j)_{j=0}^{n-1}\in \{0,1\}^n$. Moreover, we may use \cref{eq:pdecomposition} to extend operations $\{0,1\}\times \{0,1\}\to \{0,1\}$ on bits to operations $[2^n-1]_0\times [2^n-1]_0\to [2^n-1]_0$ on $n$-bit integers, by bitwise application. Let $x,y\in \{0,1\}$ and recall the following operations 

\begin{equation}\label{eq:bitops}
\text{NOT:}\ \overline{x}\coloneqq  1-x,\quad \text{XOR:}\ x\oplus y\coloneqq \overline{x}y+x\overline{y},\quad \text{AND:}\ x\wedge y\coloneqq xy.
\end{equation}
Bitwise operations on integers are defined by composing the $\oplus$ (XOR) and $\wedge$ (AND) operators with \cref{eq:pdecomposition} yielding the definitions
\begin{equation}
\text{XOR:}\ p\oplus q \coloneqq \sum_{j=0}^{n-1} (p_j\oplus q_j)2^j, \quad \text{AND:}\ p\wedge q	\coloneqq \sum_{j=0}^{n-1} (p_j\wedge q_j)2^j \quad (p,q\in [2^n-1]_0).
\end{equation}
We also make use of the Hamming weight,
\begin{equation}
\lvert p\rvert \coloneqq \sum_{j=0}^{n-1} p_j \quad (p\in [2^n-1]_0). 
\end{equation}

\subsection{Proof of Theorem~\ref{thm:main}}

Let $E_{p,q}$ be a matrix unit for $\mathbb{C}^{2^n\times 2^n}$, i.e., 
\begin{equation}
E_{p,q}\coloneqq (\delta_{j,p}\delta_{k,q})_{j,k=0}^{2^n-1} \quad (p,q\in [2^n-1]_0).
\end{equation}
We write
\begin{equation}\label{eq:pqpjqj}
E_{p,q} = \bigotimes_{j=0}^{n-1} e_j(p,q),
\end{equation}
where
\begin{equation}\label{eq:ejpq}
e_{j}(p,q)\coloneqq (\delta_{k,p_j}\delta_{l,q_j})_{k,l=0}^1 \quad (j\in [n-1]_0;p,q\in [2^n-1]_0). 
\end{equation}

The Pauli matrices form a basis for $\mathbb{C}^{2\times 2}$; for the matrix units \cref{eq:ejpq}, the explicit decomposition is
\begin{equation}
e_j(p,q)=\frac12\big(X^{{p_j\oplus q_j}}+(-1)^{p_j}(i Y)^{p_j\oplus q_j}Z^{\overline{p_j\oplus q_j}}\big).    
\end{equation}
Next, using the identity $XZ=-iY$, we obtain
\begin{equation}\label{eq:pjqjXZ}
e_j(p,q)=
\frac12 \big(X^{p_j\oplus q_j}+(-1)^{q_j}X^{p_j\oplus q_j}Z\big).
\end{equation}
Putting \cref{eq:pjqjXZ} into \cref{eq:pqpjqj} yields
\begin{align}\label{eq:pq_as_IXZ}
 E_{p,q} =&\; \frac{1}{2^n} \bigotimes_{j=0}^{n-1} \big( X^{p_j\oplus q_j}+(-1)^{q_j}X^{p_j\oplus q_j}Z\big)= \frac{1}{2^n} \bigotimes_{j=0}^{n-1} \sum_{s_j=0,1} (-1)^{q_j\wedge s_j} X^{p_j\oplus q_j} Z^{s_j}= \frac{1}{2^n} \prod_{j=0}^{n-1} \sum_{s_j=0,1} (-1)^{q_j\wedge s_j} X_j^{p_j\oplus q_j} Z_j^{s_j},
\end{align}
where
\begin{equation}\label{eq:Paulij}
X_j\coloneqq I^{\otimes j} \otimes X  \otimes I^{\otimes (n-j-1)}, \quad Z_j\coloneqq I^{\otimes j} \otimes Z  \otimes I^{\otimes (n-j-1)} \quad (j\in [n-1]_0). 
\end{equation}
It follows from \cref{eq:pq_as_IXZ} that 
\begin{equation}
E_{p,q}=  \frac{1}{2^n}\Bigg( \sum_{s_0=0,1}\cdots \sum_{s_{n-1}=0,1}\Bigg)\prod_{j=0}^{n-1}   (-1)^{q_j\wedge s_j} X_j^{p_j\oplus q_j} Z_j^{s_j},
\end{equation}
which is equivalent to
\begin{equation}\label{eq:pqrsum}
E_{p,q}= \frac{1}{2^n} \sum_{s=0}^{2^n-1}\prod_{j=0}^{n-1}    (-1)^{q_j\wedge s_j} X_j^{p_j\oplus q_j} Z_j^{s_j} 
\end{equation}
with $s$ defined via \cref{eq:pdecomposition}.

To proceed, we will express the constant factors in \cref{eq:pqrsum} using tensor products of Hadamard matrices, \cref{eq:Hadamard}. Note that 
\begin{equation}
H_{q_j,s_j}=(-1)^{q_j\wedge s_j}   \quad (q_j,s_j\in \{0,1\}) 
\end{equation}
and hence,
\begin{equation}
(H^{\otimes n})_{q,s}=\prod_{j=0}^{n-1} (-1)^{q_j\wedge s_j} \quad (q,s\in [2^n-1]_0). 
\end{equation}
Using this observation in \cref{eq:pqrsum} gives
\begin{equation}\label{eq:pqrsum2}
E_{p,q}= \frac{1}{2^n} \sum_{s=0}^{2^n-1} \Bigg(\prod_{j=0}^{n-1} (-1)^{q_j\wedge s_j}\Bigg)
   \prod_{j=0}^{n-1}   X_j^{p_j\oplus q_j} Z_j^{s_j}=\frac{1}{2^n} \sum_{s=0}^{2^n-1} (H^{\otimes n})_{q,s}
   \prod_{j=0}^{n-1}   X_j^{p_j\oplus q_j} Z_j^{s_j}.
\end{equation}
Decomposing $A$ into a sum of elementary matrices and using \cref{eq:pqrsum2}, we write
\begin{equation}\label{eq:A1}
A=\frac{1}{2^n} \sum_{p,q,s=0}^{2^n-1} a_{p,q}(H^{\otimes n})_{q,s} \prod_{j=0}^{n-1} X_j^{p_j\oplus q_j}Z_j^{s_j}.
\end{equation}
Let us introduce a new variable of summation, $r\coloneqq p\oplus q$. Note that for each $q\in [2^n-1]_0$, the map $p\mapsto p\oplus q$ is a bijection $[2^n-1]_0\to [2^n-1]_0$ and that $p=q\oplus r$ and $r_j=p_j\oplus q_j$. Hence,
\begin{equation}\label{eq:A2}
A= \frac{1}{2^n}\sum_{r,s=0}^{2^n-1}\sum_{q=0}^{2^n-1}a_{q\oplus r,q}(H^{\otimes n})_{q,s} \prod_{j=0}^{n-1} X_j^{r_j} Z_j^{s_j}=\sum_{r,s=0}^{2^n-1}\Bigg(\frac{i^{-\lvert r\wedge s\rvert}}{2^n}\sum_{q=0}^{2^n-1}a_{q\oplus r,q}(H^{\otimes n})_{q,s}\Bigg) \Bigg(i^{\lvert r\wedge s\rvert} \prod_{j=0}^{n-1} X_j^{r_j} Z_j^{s_j}\Bigg),
\end{equation}
which concludes the proof of \cref{eq:result1,eq:result1a,eq:result2} if we can show that $P_{r,s}$, defined in \eqref{eq:result1a}, is an element of $\mathcal{P}_n$. Indeed, using the identity $XZ=-iY$, we compute
\begin{equation}
X^{r_j}Z^{s_j}=\begin{cases}
I & (r_j,s_j)=(0,0) \\
X & (r_j,s_j)=(1,0) \\
Z & (r_j,s_j)=(0,1) \\
-i Y & (r_j,s_j)=(1,1),
\end{cases}
\end{equation}
which shows that
\begin{equation}
i^{\lvert r\wedge s\rvert} \prod_{j=0}^{n-1} X_j^{r_j} Z_j^{s_j}=\bigotimes_{j=0}^{n-1} i^{r_j\wedge s_j}X^{r_j}Z^{s_j}\in \mathcal{P}_n. 
\end{equation}

To establish the converse, \cref{eq:result3}, we first write 
\begin{equation}\label{eq:result3calc}
a_{s\oplus r,s}=\frac{1}{2^{n}}\sum_{u=0}^{2^n-1}\sum_{q=0}^{2^n-1} a_{u\oplus r,u}(H^{\otimes n})_{u,q}(H^{\otimes n})_{q,s}=i^{\lvert r\wedge s\rvert}\sum_{q=0}^{2^n-1} \alpha_{r,q}(H^{\otimes n})_{q,s},
\end{equation}
using \cref{eq:result2} and the fact that $H^{\otimes n}$ is symmetric and satisfies $(H^{\otimes n})^2=2^n I^{\otimes n}$. By making the replacement $r\to r\oplus s$ in \eqref{eq:result3calc} and using the identity $(r\oplus s)\wedge s=\overline{r}\wedge s$, we obtain the result \cref{eq:result3}.

\subsection{Special Cases}
We consider the special cases where $A$ is (i) Hermitian, (ii) real symmetric, or (iii) (complex) symmetric. 

\begin{corollary}[Decomposition of Hermitian matrices]
\label{cor:Hermitian}
Suppose that $A\in \mathbb{C}^{2^n\times 2^n}$ satisfies $A^{\dag}=A$. Then, $\alpha_{r,s}\in \mathbb{R}$ for $r,s\in [2^n-1]_0$.
\end{corollary}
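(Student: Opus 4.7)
The plan is to manipulate the explicit formula \cref{eq:result2} directly, computing $\overline{\alpha_{r,s}}$ and showing it equals $\alpha_{r,s}$. First, since $H$ has real entries, complex conjugation yields
\begin{equation*}
\overline{\alpha_{r,s}} = \frac{i^{\lvert r\wedge s\rvert}}{2^n}\sum_{q=0}^{2^n-1}\overline{a_{q\oplus r,q}}(H^{\otimes n})_{q,s},
\end{equation*}
and Hermiticity $\overline{a_{p,q}}=a_{q,p}$ turns the matrix element into $a_{q,\,q\oplus r}$.

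Next, I would reindex by $q'\coloneqq q\oplus r$; since XOR-ing with $r$ is a bijection on $[2^n-1]_0$, this produces $\sum_{q'}a_{q'\oplus r,q'}(H^{\otimes n})_{q'\oplus r,s}$, in which the first factor matches the matrix-element form in \cref{eq:result2}. The key algebraic step is then to split the Hadamard entry via the multiplicative identity $(H^{\otimes n})_{q'\oplus r,s} = (H^{\otimes n})_{q',s}\,(H^{\otimes n})_{r,s}$, which follows from distributivity of $\wedge$ over $\oplus$ together with the bit-level identity $(-1)^{a\oplus b}=(-1)^{a+b}$. Pulling the $r$-dependent factor $(H^{\otimes n})_{r,s}=(-1)^{\lvert r\wedge s\rvert}$ out of the sum and combining with $i^{2\lvert r\wedge s\rvert}=(-1)^{\lvert r\wedge s\rvert}$ collapses the overall prefactor back to $i^{-\lvert r\wedge s\rvert}$, reproducing $\alpha_{r,s}$ exactly.

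There is no conceptual obstacle; the proof amounts to careful bookkeeping of the $i^{\pm\lvert r\wedge s\rvert}$ phase and the sign $(-1)^{\lvert r\wedge s\rvert}$ emerging from the Hadamard multiplicativity, which must conspire to cancel. As a sanity check, a conceptually cleaner but less formula-driven alternative is to note that every $P_{r,s}\in\mathcal{P}_n$ is Hermitian, so $A=A^{\dag}=\sum_{r,s}\overline{\alpha_{r,s}}P_{r,s}$, and uniqueness of the Pauli-string expansion immediately forces $\alpha_{r,s}\in\mathbb{R}$; however, the direct route above has the virtue of exhibiting the self-consistency of \cref{eq:result2} under conjugation.
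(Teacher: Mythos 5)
Your proof is correct, but your primary route is genuinely different from the paper's. The paper argues abstractly: since each $P_{r,s}$ is Hermitian, $A-A^{\dag}=\sum_{r,s}(\alpha_{r,s}-\alpha_{r,s}^*)P_{r,s}=0$, and linear independence of $\mathcal{P}_n$ forces $\alpha_{r,s}=\alpha_{r,s}^*$ — this is exactly the ``conceptually cleaner alternative'' you relegate to a sanity check at the end. Your main argument instead verifies $\overline{\alpha_{r,s}}=\alpha_{r,s}$ directly from \cref{eq:result2}, and every step checks out: conjugation flips the prefactor to $i^{\lvert r\wedge s\rvert}$ and Hermiticity gives $\overline{a_{q\oplus r,q}}=a_{q,q\oplus r}$; the substitution $q'=q\oplus r$ restores the index pattern $a_{q'\oplus r,q'}$ at the cost of shifting the Hadamard index to $q'\oplus r$; the character property $(H^{\otimes n})_{q'\oplus r,s}=(H^{\otimes n})_{q',s}(H^{\otimes n})_{r,s}$ holds because $(q'_j\oplus r_j)\wedge s_j=(q'_j\wedge s_j)\oplus(r_j\wedge s_j)$ and $(-1)^{a\oplus b}=(-1)^{a+b}$; and $i^{\lvert r\wedge s\rvert}(-1)^{\lvert r\wedge s\rvert}=i^{-\lvert r\wedge s\rvert}$ closes the loop. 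What each approach buys: the paper's argument is shorter and requires only the Hermiticity of Pauli strings plus the (implicit) fact that $(r,s)\mapsto P_{r,s}$ enumerates a basis, whereas your computation is self-contained at the level of the explicit formula, needs no appeal to linear independence, and incidentally exhibits the Walsh-Hadamard multiplicativity that underlies the paper's other symmetry corollaries. Either proof would be acceptable; if you keep the direct route, state the identity $\lvert(q'\oplus r)\wedge s\rvert\equiv\lvert q'\wedge s\rvert+\lvert r\wedge s\rvert\pmod 2$ explicitly rather than leaving it as ``distributivity,'' since it is the one step a reader must actually verify.
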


\begin{proof}
As Pauli strings are Hermitian, it follows from \cref{eq:result1} and our Hermiticity assumption that
\begin{equation}\label{eq:AAdag}
A-A^{\dag}= \sum_{r,s=0}^{2^n-1}({\alpha}_{r,s}-{\alpha}_{r,s}^*) \bigotimes_{j=0}^{n-1} \left(i^{r_j\wedge s_j} X^{r_j} Z^{s_j}\right)=0.
\end{equation}
By the linear independence of $\mathcal{P}_n$, the prefactors of the Pauli strings in \eqref{eq:AAdag} must be identically zero. Thus, ${\alpha}_{r,s}-{\alpha}_{r,s}^*=0$ for $r,s\in [2^n-1]_0$ and the result follows. 
\end{proof}

In the case of symmetric matrices, we can use the following proposition to make statements about the sparsity of $(\alpha_{r,s})_{r,s=0}^{2^n-1}$.

\begin{proposition}[Distribution of $\pm 1$ in $H^{\otimes n}$]
\label{prop:distribution}
The matrix $H^{\otimes n}\in \mathbb{C}^{2^n\times 2^n}$ has $2^{n-1}(2^n\pm 1)$ entries with values $\pm 1$, respectively.   
\end{proposition}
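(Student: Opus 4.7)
The plan is to count, directly, the number of index pairs $(q,s)\in [2^n-1]_0\times[2^n-1]_0$ for which $(H^{\otimes n})_{q,s}=\pm 1$, exploiting the identity $(H^{\otimes n})_{q,s}=(-1)^{\lvert q\wedge s\rvert}$ already established earlier in the excerpt. Thus a $(q,s)$-entry equals $+1$ iff the bit positions $j$ where $(q_j,s_j)=(1,1)$ form a set of even cardinality, and equals $-1$ otherwise.

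The key combinatorial observation is that, at each bit position $j$, the pair $(q_j,s_j)\in\{0,1\}^2$ has exactly one outcome (namely $(1,1)$) contributing to $q_j\wedge s_j=1$, and the remaining three outcomes contribute $0$. Hence the number of pairs $(q,s)$ with $\lvert q\wedge s\rvert=k$ is $\binom{n}{k}3^{n-k}$, and the totals we seek are
\begin{equation}
N_{+}=\sum_{k\text{ even}}\binom{n}{k}3^{n-k},\qquad N_{-}=\sum_{k\text{ odd}}\binom{n}{k}3^{n-k}.
\end{equation}

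Applying the binomial theorem to $(3+1)^n=4^n$ and $(3-1)^n=2^n$ and taking the half-sum and half-difference then yields $N_{\pm}=\tfrac12(4^n\pm 2^n)=2^{n-1}(2^n\pm 1)$, as claimed.

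No step here is a real obstacle; the only care needed is to verify that the identity $(H^{\otimes n})_{q,s}=(-1)^{\lvert q\wedge s\rvert}$ (which follows from $H_{a,b}=(-1)^{a\wedge b}$ for $a,b\in\{0,1\}$ together with the tensor-product formula for entries) is written with $\lvert q\wedge s\rvert$ rather than just $q\wedge s$. As a sanity check, the two counts sum to $4^n=2^{2n}$, the total number of entries of $H^{\otimes n}$, and for $n=1$ they give $3$ and $1$, matching the single $-1$ in $H$. An equivalent inductive proof using the block structure $H^{\otimes(n+1)}=H\otimes H^{\otimes n}$ and the recursion $(N_+^{(n+1)},N_-^{(n+1)})=(3N_+^{(n)}+N_-^{(n)},N_+^{(n)}+3N_-^{(n)})$ is available but the above one-line generating-function argument is cleanest.
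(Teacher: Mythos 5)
Your proof is correct, but it takes a genuinely different route from the paper's. The paper argues by induction on $n$: it verifies the case $n=1$, writes $H^{\otimes(n+1)}=H^{\otimes n}\otimes H$, and leaves the counting of $\pm 1$ entries in the tensor product as a ``straightforward calculation'' (each $+1$ of $H^{\otimes n}$ spawns three $+1$'s and one $-1$, and vice versa, which is exactly the recursion $(N_+^{(n+1)},N_-^{(n+1)})=(3N_+^{(n)}+N_-^{(n)},N_+^{(n)}+3N_-^{(n)})$ you mention in passing). You instead count directly via the identity $(H^{\otimes n})_{q,s}=(-1)^{\lvert q\wedge s\rvert}$ (whose product form is indeed established in the paper's proof of Theorem~\ref{thm:main}), showing that exactly $\binom{n}{k}3^{n-k}$ index pairs satisfy $\lvert q\wedge s\rvert=k$ and then splitting $4^n=(3+1)^n$ and $2^n=(3-1)^n$ by parity. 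Your argument is slightly longer to write but strictly more informative: it yields the full distribution of $\lvert q\wedge s\rvert$ over entries, not just its parity, and it ties the proposition directly to the combinatorial quantity $\lvert r\wedge s\rvert$ that governs the sparsity pattern in Corollary~\ref{cor:realsymmetric}. The paper's induction is more economical for the stated claim alone. Both are complete and correct.
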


\begin{proof}
The claim is easily verified in the case $n=1$. Assume, inductively, that the claim holds for an arbitrary positive integer $n$. We write
$H^{\otimes (n+1)}=H^{\otimes n}\otimes H$;
a straightforward calculation shows that $H^{\otimes (n+1)}$ has $2^{n}(2^{n+1}\pm 1)$ entries with values $\pm 1$, respectively. The result follows. 
\end{proof} 

\begin{corollary}[Decomposition of real symmetric matrices]
\label{cor:realsymmetric}
Suppose that $A\in \mathbb{R}^{2^n\times 2^n}$ satisfies $A^{\top}=A$. Then, $\alpha_{r,s}\in \mathbb{R}$ when $
|r\wedge s|$ is even and $\alpha_{r,s}=0$ when $|r\wedge s|$ is odd. Correspondingly, $(\alpha_{r,s})_{r,s=0}^{2^n-1}$ has a sparsity of $(2^n-1)/2^{n+1}$. 
\end{corollary}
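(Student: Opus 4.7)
My plan is to combine Corollary~\ref{cor:Hermitian} with a direct parity inspection of the explicit formula \eqref{eq:result2}, and then finish with a short binomial count.

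First, since $A$ is real and symmetric, $A^{\dag}=\overline{A}^{\top}=A^{\top}=A$, so $A$ is in particular Hermitian. Corollary~\ref{cor:Hermitian} then immediately yields $\alpha_{r,s}\in\mathbb{R}$ for all $r,s\in[2^n-1]_0$, which already disposes of the reality claim in the even case. For the vanishing claim I would then stare at
\[
\alpha_{r,s}=\frac{i^{-|r\wedge s|}}{2^n}\sum_{q=0}^{2^n-1} a_{q\oplus r,q}\,(H^{\otimes n})_{q,s}.
\]
Since the entries of $A$ are real and those of $H^{\otimes n}$ lie in $\{+1,-1\}$, the sum on the right is real, so $\alpha_{r,s}$ equals a real number times $i^{-|r\wedge s|}$. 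When $|r\wedge s|$ is odd, $i^{-|r\wedge s|}\in\{i,-i\}$ is purely imaginary, hence $\alpha_{r,s}$ is purely imaginary. Combined with the Hermitian conclusion, this forces $\alpha_{r,s}=0$.

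To obtain the stated sparsity, I would count the pairs $(r,s)\in[2^n-1]_0\times[2^n-1]_0$ with $|r\wedge s|$ odd. Bit by bit, $(r_j,s_j)$ takes one of four values and contributes to $|r\wedge s|$ only when $(r_j,s_j)=(1,1)$; hence the number of pairs with $|r\wedge s|=k$ equals $\binom{n}{k}3^{n-k}$. Summing over odd $k$ by the standard identity
\[
\sum_{k\text{ odd}}\binom{n}{k}3^{n-k}=\frac{(3+1)^n-(3-1)^n}{2}=\frac{4^n-2^n}{2},
\]
and dividing by the total $4^n$ of pairs, gives the advertised sparsity $(2^n-1)/2^{n+1}$.

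I do not anticipate a serious obstacle. The only substantive observation is that the prefactor $i^{-|r\wedge s|}$ in \eqref{eq:result2} has a parity that, against the Hermitian-reality conclusion from Corollary~\ref{cor:Hermitian}, kills exactly the odd-parity coefficients; everything else is a routine binomial count.
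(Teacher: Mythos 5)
Your proof is correct, but it reaches both non-trivial conclusions by a genuinely different route than the paper. For the vanishing claim, the paper argues structurally: it transposes the decomposition \eqref{eq:result1} term by term using $\left(i^{r_j\wedge s_j}X^{r_j}Z^{s_j}\right)^{\top}=(-1)^{r_j\wedge s_j}\,i^{r_j\wedge s_j}X^{r_j}Z^{s_j}$, so that $A-A^{\top}=0$ together with the linear independence of $\mathcal{P}_n$ forces $\bigl(1-(-1)^{|r\wedge s|}\bigr)\alpha_{r,s}=0$. You instead read the phase directly off the explicit formula \eqref{eq:result2}: realness of $A$ makes the Hadamard sum real, so $\alpha_{r,s}$ is $i^{-|r\wedge s|}$ times a real number, and the reality conclusion of \cref{cor:Hermitian} then annihilates the odd-parity (purely imaginary) coefficients. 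Your version has the small bonus of showing that for \emph{any} real matrix, symmetric or not, the even-parity coefficients are real and the odd-parity ones purely imaginary, which cleanly separates what realness buys from what symmetry buys. For the sparsity count, the paper identifies the zero locations with the $-1$ entries of $H^{\otimes n}$ via $(H^{\otimes n})_{r,s}=(-1)^{|r\wedge s|}$ and invokes \cref{prop:distribution}, which is proved by induction; you count the pairs with $|r\wedge s|$ odd directly via $\sum_{k\ \mathrm{odd}}\binom{n}{k}3^{n-k}=(4^n-2^n)/2$. Both counts give $2^{n-1}(2^n-1)$ zeros out of $2^{2n}$ entries and hence the stated sparsity; your binomial computation in fact constitutes an independent one-line proof of \cref{prop:distribution}.
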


\begin{proof}
The first claim follows immediately from Corollary~\ref{cor:Hermitian}. To prove the second claim, we use the identity
\begin{equation}
\left(i^{r_j\wedge s_j} X^{r_j} Z^{s_j}\right)^{\top}
= (-1)^{r_j\wedge s_j}\left(i^{r_j\wedge s_j} X^{r_j} Z^{s_j}\right)
\end{equation}
to group Pauli strings in \cref{eq:result1}:
\begin{equation}
A-A^{\top}=\sum_{r,s=0}^{2^n-1} \big(\alpha_{r,s}-(-1)^{\lvert r\wedge s\rvert}\alpha_{r,s}\big)\bigotimes_{j=0}^{n-1}\left(i^{r_j\wedge s_j} X^{r_j} Z^{s_j}\right)=0.
\end{equation}

By the linear independence of $\mathcal{P}_n$, we must have $\alpha_{r,s}-(-1)^{r\wedge s}\alpha_{r,s}=0$ for $r,s\in [2^n-1]_0$. In the case $|r\wedge s|$ odd, this implies $\alpha_{r,s}=0$. 

To compute the sparsity of $(\alpha_{r,s})_{r,s=0}^{2^n-1}$, we note that 
\begin{equation}
(H^{\otimes n})_{r,s}=(-1)^{\lvert r\wedge s\rvert} \quad (r,s\in [2^n-1]_0),    
\end{equation}
i.e., $(H^{\otimes})_{r,s}$ is equal to the parity of $\lvert r\wedge s\rvert$. Hence, using Proposition~\ref{prop:distribution}, we obtain the stated sparsity: The locations of zeros correspond to the locations of $-1$ in $H^{\otimes n}$. 
\end{proof}

\begin{corollary}[Decomposition of complex symmetric matrices]
Suppose that $A\in \mathbb{C}^{2^n\times 2^n}$ satisfies $A^{\top}=A$. Then, $\alpha_{r,s}=0$ when $|r\wedge s|$ is odd. Correspondingly, $(\alpha_{r,s})_{r,s=0}^{2^n-1}$ has a sparsity of $(2^n-1)/2^{n+1}$.      
\end{corollary}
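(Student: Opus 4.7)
The proof will essentially reuse the symmetry argument from Corollary~\ref{cor:realsymmetric}, stripped of the Hermiticity part since complex symmetry does not give a reality constraint on the coefficients. The plan is to show that $A^{\top}=A$ forces every Pauli string whose transpose differs by a sign to have vanishing coefficient, and then to count such strings using Proposition~\ref{prop:distribution}.

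First I would record the elementary transpose identity established in the proof of Corollary~\ref{cor:realsymmetric},
\begin{equation}
\left(i^{r_j\wedge s_j}X^{r_j}Z^{s_j}\right)^{\top}=(-1)^{r_j\wedge s_j}\left(i^{r_j\wedge s_j}X^{r_j}Z^{s_j}\right),
\end{equation}
and tensor it over $j\in[n-1]_0$ to obtain $P_{r,s}^{\top}=(-1)^{\lvert r\wedge s\rvert}P_{r,s}$. Applying this to the decomposition \cref{eq:result1} and using $A=A^{\top}$ yields
\begin{equation}
0=A-A^{\top}=\sum_{r,s=0}^{2^n-1}\big(1-(-1)^{\lvert r\wedge s\rvert}\big)\alpha_{r,s}\,P_{r,s}.
\end{equation}

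Next, the linear independence of $\mathcal{P}_n$ (the same fact invoked in Corollary~\ref{cor:Hermitian} and Corollary~\ref{cor:realsymmetric}) forces every coefficient in this expansion to vanish, so $\alpha_{r,s}=0$ whenever $\lvert r\wedge s\rvert$ is odd. Note that, in contrast to the real symmetric case, no reality constraint appears: for even $\lvert r\wedge s\rvert$ the factor $1-(-1)^{\lvert r\wedge s\rvert}$ vanishes identically and leaves $\alpha_{r,s}$ unconstrained in $\mathbb{C}$.

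For the sparsity count, I would repeat the argument at the end of Corollary~\ref{cor:realsymmetric} verbatim: the parity of $\lvert r\wedge s\rvert$ coincides with the sign of $(H^{\otimes n})_{r,s}$, so the indices $(r,s)$ with $\lvert r\wedge s\rvert$ odd are exactly the positions of the $-1$ entries of $H^{\otimes n}$, which by Proposition~\ref{prop:distribution} number $2^{n-1}(2^n-1)$ out of $2^{2n}$ total entries, giving the stated sparsity $(2^n-1)/2^{n+1}$. There is no serious obstacle here; the only thing to flag is that the proof does not go through Corollary~\ref{cor:Hermitian} and therefore does not (and should not) conclude $\alpha_{r,s}\in\mathbb{R}$.
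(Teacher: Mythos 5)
Your proof is correct, but it takes a genuinely different route from the paper. The paper writes $A=\mathrm{Re}(A)+i\,\mathrm{Im}(A)$, observes that both parts are real symmetric, and deduces the vanishing of $\alpha_{r,s}$ for odd $\lvert r\wedge s\rvert$ from Corollary~\ref{cor:realsymmetric} together with the linearity of the map $A\mapsto(\alpha_{r,s})$, which is evident from \cref{eq:result2}. You instead rerun the transpose argument directly on the complex matrix: $P_{r,s}^{\top}=(-1)^{\lvert r\wedge s\rvert}P_{r,s}$, then $0=A-A^{\top}=\sum_{r,s}\bigl(1-(-1)^{\lvert r\wedge s\rvert}\bigr)\alpha_{r,s}P_{r,s}$, and linear independence of $\mathcal{P}_n$ kills the odd-parity coefficients. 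Your version is self-contained and arguably cleaner --- it does not need the real/imaginary splitting or an appeal to linearity of the coefficient map, and it makes transparent why no reality constraint survives in the complex case --- while the paper's version is shorter because it reuses the already-established real symmetric corollary. The sparsity count is identical in both: odd $\lvert r\wedge s\rvert$ corresponds to the $-1$ entries of $H^{\otimes n}$, of which there are $2^{n-1}(2^n-1)$ by Proposition~\ref{prop:distribution}, yielding $(2^n-1)/2^{n+1}$.
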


\begin{proof}
We write
\begin{equation}
A=\mathrm{Re}(A)+i\,\mathrm{Im}(A)    
\end{equation}
and note that $\mathrm{Re}(A)$ and $\mathrm{Im}(A)$ are real symmetric matrices. By Corollary~\ref{cor:realsymmetric} and linearity, we obtain the first claim. 

The proof of the sparsity claim is similar to that in the proof of Corollary~\ref{cor:realsymmetric}.
\end{proof}

\section{Algorithm and Numerical Results}\label{sec:num_results}

In this section, we present our implementation of the Pauli decomposition and compare its efficiency to the latest (still unreleased) Qiskit implementation~\cite{qiskit_tpd_2024} of the TPD algorithm~\cite{hantzko_tensorized_2024}. The following algorithm is also illustrated in Fig.~\ref{fig:algo}.

\begin{figure}
    \begin{overpic}[width=0.8\textwidth]{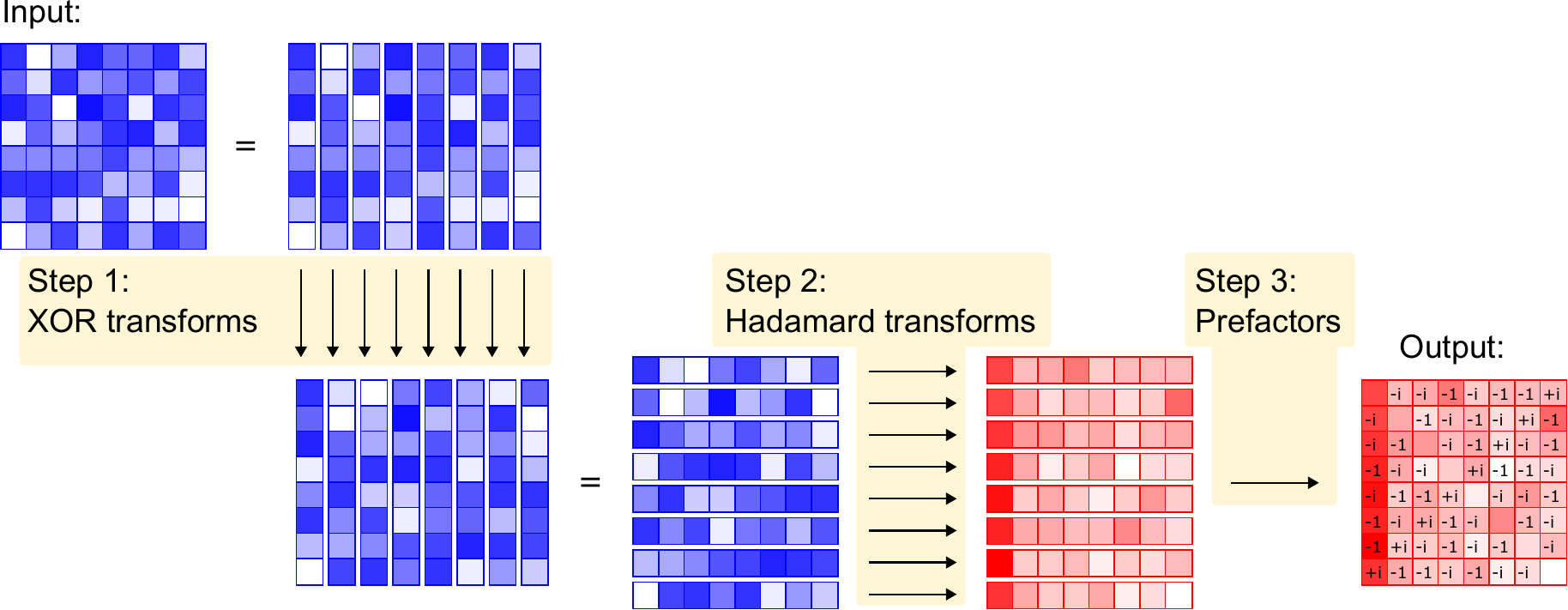}
    \put(6.2,37.6){$A=(a_{p,q})_{p,q=0}^{7}$}
    \put(96.5,16.7){$(\alpha_{r,s})_{r,s=0}^{7}$}
    \end{overpic}
    \caption{Illustration of Algorithm~\ref{algo:main} and Theorem~\ref{thm:main} for an $8\times8$ matrix. In Step 1, the XOR transform is applied, which permutes the elements in each matrix column according to~\eqref{eq:step1}. Then, in Step 2, the Walsh-Hadamard transform is applied to each row according to~\eqref{eq:step2}. This can be done efficiently with the Fast Walsh-Hadamard algorithm \cite{yarlagadda_hadamard_1997}. Finally, in Step 3, the elements are multiplied by prefactors according to~\eqref{eq:step3}.}
    \label{fig:algo}
\end{figure}

\begin{algo}[Pauli decomposition of $A\in\mathbb{C}^{2^n\times 2^n}$ via the Walsh-Hadamard transform]\label{algo:main} 
\hfill\break
\textbf{Input:} 
\begin{itemize}
    \item The matrix elements 
    $(a_{p,q})_{p,q=0}^{N-1}$ of a matrix $A$ of dimension $N\times N$ ($N=2^n$).
\end{itemize}
\textbf{Output:} 
\begin{itemize}
\item The coefficients of the Pauli decomposition $A=\sum_{r,s=0}^{2^n-1}
\alpha_{r,s} P_{r,s}$ in \cref{eq:result1}. They are computed in-place, i.e. they are stored in $(a_{p,q})_{p,q=0}^{N-1}$.
\end{itemize}
\textbf{Complexity:}
\begin{itemize}
    \item Runtime: $\mathcal{O}(N^2\log N)$
    \item Additional space: $\mathcal{O}(1)$
\end{itemize}
\textbf{Algorithm:}
\begin{enumerate}
\item XOR-Transform: Permute the elements of the matrix $A=(a_{p,q})_{p,q=0}^{N-1}$ according to
    \begin{equation}
        a_{r,q} \leftarrow a_{r\oplus q, q} \quad (r,q\in [N-1]_0).
    \label{eq:step1}
    \end{equation}
    This transformation can be done in place requiring only $2^{n-1} (2^{n} - 1)$ SWAP operations.
\item Fast Walsh-Hadamard Transform: Apply the Walsh-Hadamard transform
    \begin{equation}
        a_{r,s} \leftarrow \sum_{q=0}^{N - 1} a_{r,q} H^{\otimes n}_{q,s} \quad \quad (r,s\in [N-1]_0).
    \end{equation}
    This can be done in place using only $n {2^{2n}}$ addition and $n{2^{2n}}$ subtraction operations on complex numbers~\cite{yarlagadda_hadamard_1997}. This is because the generalised Hadamard matrix
    \begin{equation}
        H^{\otimes n} = \prod_{j=0}^{n - 1} H_j\ \text{with}\ 
    H_j\coloneqq I^{\otimes j} \otimes H  \otimes I^{\otimes (n-j-1)} \quad (j\in [n-1]_0),
    \label{eq:step2}
    \end{equation}
    and $H_j$ is highly sparse and structured: Each column of $H_j$ has only 2 non-zero elements, and they take values in $\{\pm1\}$. Other decompositions of generalised Hadamard matrices are possible~\cite{fino_unified_1976, yarlagadda_hadamard_1997}.
\item Prefactors: Multiply each element of $A$ according to 
    \begin{equation}
         a_{r,s} \leftarrow a_{r,s} \frac{(-i)^{|r \wedge s|}}{2^{n}} \quad \quad (r,s\in [N-1]_0).
    \label{eq:step3}
    \end{equation}
    This can be done in place.
\end{enumerate}
\textbf{Reference implementation:}
\begin{itemize}
    \item A reference implementation in C with Python \& Numpy bindings is freely available on Github \cite{efficient_pauli_decomp} under the MIT license. It can also be installed with \verb?pip install pauli_lcu?.
\end{itemize}
\end{algo}

The execution time of this algorithm for random Hermitian matrices as a function of $n$ is presented in~\cref{fig:execution_time} (green line). We compare our implementation to Qiskit's function \texttt{decompose\_dense} from \texttt{qiskit.\_accelerate.sparse\_pauli\_op} in the yet unpublished version \cite{qiskit_tpd_2024} (blue line). Qiskit includes calculation of the symplectic representation~\cite{aaronson_improved_2004} of the Pauli strings along with the coefficients; to make a fair comparison, we also show our algorithm including this calculation (orange line), and time our Python bindings rather than C code throughout. We observe that for more than 4 qubits, our implementation outperforms Qiskit and the maximum speed up we achieved is around a factor of $\approx 1.4$. We also compare our results with the approach of Hamaguchi, Hamada and Yoshioka, which is based on the Fast Walsh-Hadamard transformation as well. We used their C++ implementation as available at the GitHub repository~\cite{hamaguchi_code}. We observe that our algorithm provides a maximum speedup of a factor 3.6. For further details of our implementation we refer to our open-source implementation at Ref.~\cite{efficient_pauli_decomp}.
Finally, we note that the reverse transform \cref{eq:result3} can be achieved by effectively running Algorithm~\ref{algo:main} backwards.

\begin{figure}[ht]
    \centering
    \includegraphics[width=0.75\linewidth]{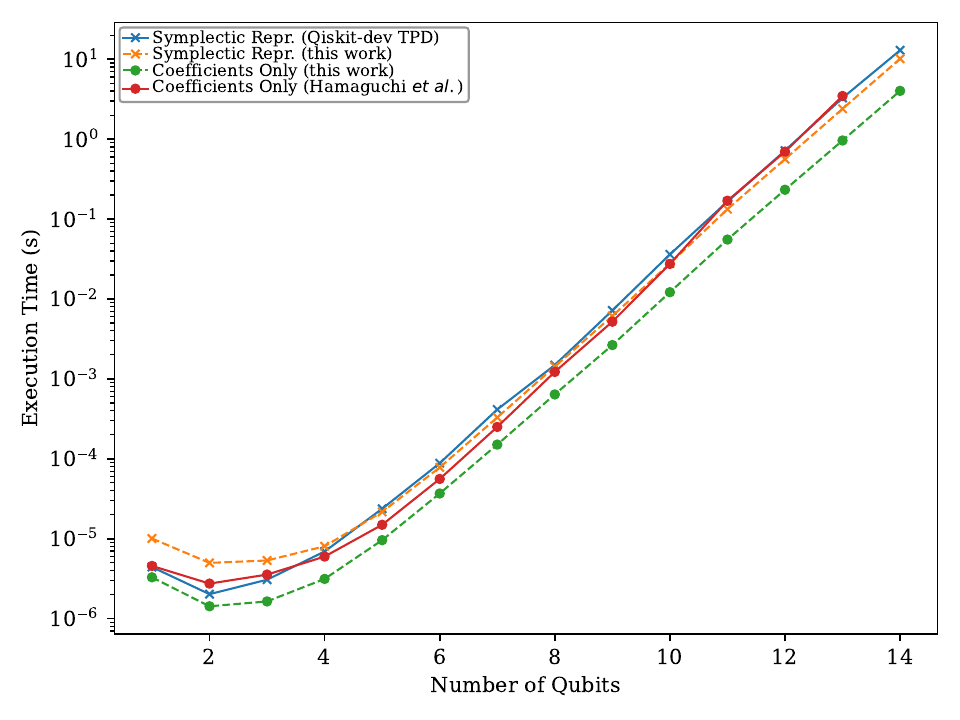}
    \caption{
    The execution time to perform a Pauli decomposition on random Hermitian matrices, as a function of the number of qubits. This calculation was performed on a single core AMD EPYC 7763 processor, 2.4GHz. Each data point was obtained by averaging over 10 runs for each of 10 different random matrices (100 runs contribute to each data point). The orange and green lines show this work, with and without calculation of a symplectic representation of Pauli strings, respectively. Qiskit-dev TPD refers to to the latest (still unreleased) Qiskit implementation~\cite{qiskit_tpd_2024}  of the TPD algorithm~\cite{hantzko_tensorized_2024}. Hamaguchi \textit{et al.} refers to the work~\cite{hamaguchi2024handbookefficientlyquantifyingrobustness, hamaguchi_code}
    }
    \label{fig:execution_time}
\end{figure}

\begin{table}[ht]
    \centering
\begin{tabular}{|c|c|c|c|c|}
\hline
Number of & Coefficients only & Symplectic repr.& Symplectic repr. &Coefficients only \\
basis functions & (this work) & (this work) & (Qiskit-dev TPD) &(Hamaguchi $et$ $al.$) \\
\hline
16 & 0.000877 & 0.00132 & 0.00188 & 0.00204 \\
\hline
32 & 0.0103 & 0.0269 & 0.0271 & 0.0327 \\
\hline
64 & 0.226 & 0.526 & 0.607 & 0.788 \\
\hline
128 & 3.95 & 9.52 & 11.1 & memory error \\
\hline
\end{tabular}
    \caption{The execution time (in seconds) to perform a Pauli decomposition on electronic integrals in a molecular orbital basis set. This calculation was performed on a single core AMD EPYC 7763 processor, 2.4GHz.
    Qiskit-dev TPD refers to the Qiskit implementation~\cite{qiskit_tpd_2024}  of the TPD algorithm~\cite{hantzko_tensorized_2024}. Hamaguchi \textit{et al.} refers to the work~\cite{hamaguchi2024handbookefficientlyquantifyingrobustness, hamaguchi_code}.
    }
    \label{tab:molecular}
\end{table}

To evaluate the performance of our algorithms further, we consider two more examples from chemistry and physics. One is the transformation of the electron repulsion integrals:
\begin{equation}
    h_{IJ} = \int\limits_{\mathbb{R}^3\times \mathbb{R}^3} d{\bf r'}d{\bf r} \, \frac{\rho_I^*({\bf r},{\bf r'}) \rho_J({\bf r},{\bf r'})}{|{\bf r} - {\bf r'}|},\quad I,J \in [D^2-1]_0,
\end{equation}
where $D$ is the number of basis functions, and $\rho_I({\bf r},{\bf r'})$ is a pair-density matrix. We use precomputed integrals for the Fe$_2$S$_2$ complex from Ref~\cite{georges_fcidump_files}. \cref{tab:molecular} shows the execution time of our algorithms, the algorithm of Hantzko \textit{et al.}~\cite{hantzko_tensorized_2024} as implemented on Qiskit, and the algorithm of Hamaguchi \textit{et al.}~\cite{hamaguchi2024handbookefficientlyquantifyingrobustness,hamaguchi_code}. For the largest basis set size, we observe a speedup of 1.16 for the symplectic representation , and further factor of 2.4 (2.78 in total as compared to Qiskit-dev TPD) when one is interested in Pauli coefficients only.

Our second example is the transformation of the kinetic energy operator represented in real space using the dual plane-wave basis~\cite{skylaris2002nonorthogonal,mostofi2002total, babbushLowDepthQuantumSimulation2018a}. For a cubic cell with lattice constant of one and for uniformly distributed grid points, the kinetic energy matrix is
\begin{equation}
    T_{{\bf n},{\bf n'}} = 
    2\pi^2 
    \sum_{m_1,m_2,m_3=-N^{1/3}/2}^{N^{1/3}/2-1} 
    \left(m_1^2 + m_2^2 +m_3^2\right)
    \exp\left(
    \frac{2\pi i}{N^{1/3}} 
    \sum_{j=1}^{3} m_j (n_j-n'_j)
    \right)
\end{equation}
where $N$, the total number of grid points, is a power of $2^3$; $n_1, n_2, n_3, n'_1, n'_2, n'_3 \in \left[N^{1/3} - 1\right]_0$, and ${\bf n} = (n_1, n_2, n_3)$, ${\bf n'} = (n'_1, n'_2, n'_3)$. In addition to being symmetric, the kinetic energy matrix also contains translational symmetry. We provide the python script for generating such a matrix in Ref.~\cite{georges_2024_13844352}. \cref{tab:kinetic} shows the execution time of different algorithms. For the largest calculations, where the total number of grid points is $N = 2^{15}$, we observe a factor of 1.65 speedup for the symplectic representation as compared to Qiskit-dev TPD, and a further factor of 2.5 when one is interested in Pauli coefficients only (a factor of 4.12 speed up in total as compared to Qiskit-dev TPD).

\begin{table}[ht]
    \centering
\begin{tabular}{|c|c|c|c|c|}
\hline
Number of & Coefficients only & Symplectic repr.& Symplectic repr. &Coefficients only \\
grid points & (this work) & (this work) & (Qiskit-dev TPD) &(Hamaguchi $et$ $al.$) \\
        
\hline
8 & $2.95 \cdot 10^{-5}$ & $2.13 \cdot 10^{-5}$ & 0.000102 & $5.62 \cdot 10^{-5}$ \\
\hline
64 & $6.23 \cdot 10^{-5}$ & $8.35 \cdot 10^{-5}$ & 0.00018 & $8.5 \cdot 10^{-5}$ \\
\hline
512 & 0.00227 & 0.00546 & 0.0113 & 0.00706 \\
\hline
4096 & 0.221 & 0.528 & 0.941 & 0.79 \\
\hline
32768 & 17.2 & 41.3 & 81.6 & memory error \\
\hline
\end{tabular}
    \caption{The execution time (in seconds) to perform a Pauli decomposition on a kinetic energy matrix in real space. This calculation was performed on a single core AMD EPYC 7763 processor, 2.4GHz.
    Qiskit-dev TPD refers to the Qiskit implementation~\cite{qiskit_tpd_2024}  of the TPD algorithm~\cite{hantzko_tensorized_2024}. Hamaguchi \textit{et al.} refers to the work~\cite{hamaguchi2024handbookefficientlyquantifyingrobustness, hamaguchi_code}
    }
    \label{tab:kinetic}
\end{table}

As can be seen from~\cref{fig:algo}, our algorithm allows for straightforward parallelization. First, the XOR transformation (Step 1 in \cref{fig:algo}) is applied to each column, then Hadamard transformation and phase factor multiplication are applied to each row independently (Steps 2--3 in \cref{fig:algo}). We have implemented this parallelization with OpenMP and tested it on the largest example from~\cref{tab:kinetic} ($2^{15} \times 2^{15}$ matrix). As is shown on~\cref{fig:openmp_speedup}, we observe nearly ideal speedup for 2 and 4 cores. For 8 cores, the wall time is reduced by a factor of 7 and the Pauli coefficients can be calculated in around 2.5 seconds as compared to 17.3 seconds on a single AMD EPYC 7763 core.

\begin{figure}[ht]
    \centering
    \includegraphics[width=0.5\linewidth]{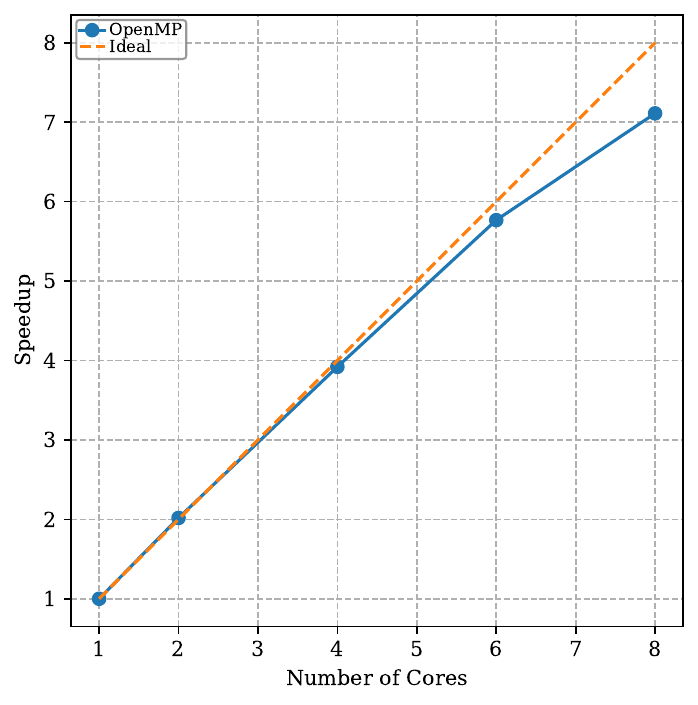}
    \caption{
    Speedup of Pauli decomposition due to parallelization using OpenMP. The same $2^{15} \times 2^{15}$ kinetic energy matrix as in~\cref{tab:kinetic} was used as an example. This calculation was performed on AMD EPYC 7763 processors, 2.4GHz, and 386 GB RAM. 
    }
    \label{fig:openmp_speedup}
\end{figure}

\section{Conclusion}\label{sec:conclusion}

In this work, we have presented an exact and explicit approach to the Pauli decomposition of matrices $A\in \mathbb{C}^{2^n\times 2^n}$. Practically, our theoretical result inspires an efficient algorithm for the computation of the Pauli string coefficients. We develop this algorithm and demonstrate that it outperforms the previous leading solutions in the literature.
We emphasize that, as we employ the fast Walsh-Hadamard transform, our algorithm achieves favorable performance versus existing solutions without sophisticated optimisation techniques.

Our results have immediate applications in quantum chemistry.
Pauli decompositions of Hamiltonians in second quantization are available through fermion-to-qubit mappings such as that of Jordan-Wigner~\cite{Jordan1928,ortiz_quantum_2001}. However, they are not applicable to Hamiltonians in first quantization.
Our theorem applied to Hermitian matrices and their higher-dimensional tensorial extensions allows one to derive the explicit Pauli decomposition of first quantization Hamiltonians for the study of chemical systems on quantum computers~\cite{georges_quantum_2024}. 
More generally, we expect that our results will be useful in the block-encoding of matrices. Finally, our solution is available as an open source package at~\cite{efficient_pauli_decomp}. 

\section{Supplementary Information}
Scripts used to produce results for this paper are available at Zenodo~\cite{georges_2024_13844352}. Our code is further available on Github~\cite{efficient_pauli_decomp} and with \verb?pip install pauli_lcu?.

\section*{Acknowledgments}
We thank Marius Bothe, R\'obert Izs\'ak, and Matt Ord for collaboration on closely related topics. AVI also thanks Mark Turner for helpful discussion about parallelisation. The work presented in this paper was partially funded by a grant from Innovate UK under the `Commercialising Quantum Technologies: CRD \& Tech round 2’ competition (Project Number 10004793). 

\bibliography{main.bib}

\end{document}